\documentclass[journal]{IEEEtran}
\usepackage{epsfig,epsf,epstopdf}
\usepackage{hyperref}
\usepackage[dvipsnames]{xcolor}
\usepackage{booktabs} 
\usepackage{multicol}
\usepackage{subcaption}
\usepackage{amsthm}

\makeatletter
\renewenvironment{proof}[1][\proofname]{\par
  \pushQED{\qed}%
  \normalfont \topsep6\p@\@plus6\p@\relax
  \trivlist
  \item[\hskip\labelsep
        {\bfseries\itshape
    #1:}\@addpunct{}]\ignorespaces
}{%
  \popQED\endtrivlist\@endpefalse
}
\makeatother

\newtheoremstyle{boldhead}
  {} 
  {} 
  {\normalfont} 
  {} 
  {\bfseries\itshape} 
  {.} 
  { } 
  {\thmname{#1}\thmnumber{ #2}\thmnote{ \normalfont(#3)}} 

\theoremstyle{boldhead}
\newtheorem{theorem}{Theorem}

\newtheorem{assumption}{Assumption}
\newtheorem{definition}{Definition}
\newtheorem{remark}{Remark}

\graphicspath{ {./Figs/} }
\usepackage[cmex10]{amsmath}
\usepackage{amsfonts}
\usepackage{textcomp}
\usepackage{amssymb}

\usepackage[noend]{algpseudocode}
\makeatletter
\def\BState{\State\hskip-\ALG@thistlm}
\makeatother
\usepackage{cite}
\usepackage{paralist}
\usepackage{color}

\definecolor{blue}{rgb}{0, 0.1, 0.7}
\usepackage{mathrsfs}
\usepackage{multirow}
\usepackage{graphicx}
\usepackage{caption}
\usepackage[font=small,justification=justified,singlelinecheck=false]{caption}

\ifCLASSINFOpdf
 
\else
 
\fi

\usepackage{xcolor}
\begin{document}

\title{A Unified Framework for Attack-Resilient CLF-CBF Quadratic Programs for Nonlinear Control-Affine Systems}
\author{Mohamadamin Rajabinezhad, and Shan Zuo
\thanks{Mohamadamin Rajabinezhad, and Shan Zuo are with the Department of Electrical and Computer Engineering, University of Connecticut, Storrs, CT 06269, USA. (Emails:mohamadamin.rajabinezhad@uconn.edu; shan.zuo@uconn.edu)(This work is under review for possible publication.)}
}

\maketitle

\begin{abstract} 
This letter introduces attack-resilient Control Lyapunov Functions
(AR-CLFs) and attack-resilient Control Barrier Functions (AR-CBFs) for nonlinear control-affine systems subject to control-input false data injection attacks (FDIA) satisfying an at-most-exponentially growing envelope. The proposed framework embeds a unified adaptive compensation term into both the CLF decrease and CBF safety constraints. In contrast to input-to-state stability/safety (ISS/ISSf)-based methods that certify disturbance-dependent enlarged
safe sets, the proposed approach enables finite-time recovery to the
nominal safe set without requiring a prior magnitude bound
on the FDIA, relying instead on a growth-rate characterization used for analysis and an online gain tuning law that regulates the compensation term. A unified quadratic program (QP) is developed to enforce the AR-CLF and AR-CBF conditions simultaneously, guaranteeing uniformly ultimately bounded (UUB) stability and uniform ultimate safety (UUS) under unbounded FDIA. Numerical results demonstrate improved resilience compared to existing ISS-CLF, ISSf-CBF, and robust CLF–CBF-QP approaches.
\end{abstract}
\begin{IEEEkeywords}
Nonlinear systems, Control barrier functions, Control Lyapunov functions, Attack-Resilient CLF–CBF. 
\end{IEEEkeywords}

\section{Introduction}
CLFs and CBFs are standard tools for optimization-based stabilization and safety of nonlinear control-affine systems. CLFs certify stabilizability and lead to CLF-QPs, with input-to-state stability (ISS)-type extensions addressing bounded disturbances and uncertainties~\cite{garg2021robust}. CBFs encode safety through forward invariance of safe sets~\cite{ames2017cbfcdc,ames2014adaptive}, and their integration with CLFs in real-time QPs has enabled safety-critical control in automotive, robotic, aerial, and multi-agent systems (MAS)~\cite{ames2014adaptive,wang2017multirobot,das2025robustcbf}. Extensions such as input-to-state safety (ISSf), introduced as the safety counterpart of ISS, incorporate robustness against bounded disturbances but rely on disturbance-dependent safety guarantees~\cite{kolathaya2018input}. 

Beyond ISS/ISSf-based robustness, resilient and attack-aware control has been studied through secure estimation, attack detection/isolation, fault-tolerant control, resilient consensus, and safety filters for adversarial or faulty systems \cite{rajabinezhad2025privacy,rajabinezhad2025lyapunov}. Recent CBF-based methods, for example, address sampled-data adversarial MAS and safe control under faults/attacks using estimator-, redundancy-, or failure-pattern-based mechanisms~\cite{usevitch2022adversarial,zhang2025safe}. These approaches provide important resilience mechanisms; however, they typically rely on explicit attack/fault models, estimators, detection logic, or known failure structures, and are not aimed at a unified CLF-CBF compensation framework for unknown control-input FDIA. In parallel, ISS/ISSf-based CLF-CBF methods certify disturbance-dependent enlarged safe sets under bounded perturbations~\cite{das2025robustcbf,kolathaya2018input}. Such bounded-disturbance or model-dependent formulations may be restrictive in cyber--physical systems where FDIA on controller-to-actuator channels can be time-varying and unbounded, motivating resilient CLF-CBF designs that directly compensate unknown FDIA while preserving both stability and safety guarantees.
This letter develops a unified \emph{AR-CLF--CBF} framework for nonlinear control-affine systems under unknown and unbounded control-
input FDIA. The framework embeds an online gain tuning law that regulates the compensation term into both stability and safety constraints in real-time, enabling resilient CLF--CBF-QP control beyond bounded-disturbance settings. The main contributions are:
\begin{itemize}
  
      
   \item  
    Existing ISS/ISSf-based CLF--CBF methods typically certify disturbance-dependent enlarged safe sets under bounded disturbances. We instead develop a unified AR-CLF--CBF-QP with online adaptation-based compensation embedded in both the CLF and CBF constraints.This unified structure provides stability and safety resilience under unknown control-input FDIA satisfying an at-most-exponentially growing envelope.

    \item 
    We establish UUB of the closed-loop state and UUS of the nominal safe set for the considered unbounded time-varying control-input FDIA. Unlike disturbance-dependent enlarged-set guarantees, the proposed UUS result ensures finite-time recovery to the nominal safe set after a bounded transient excursion.

     \item 
     Rigorous Lyapunov- and barrier-function analysis establishes conditions for UUB and UUS under unbounded control-input FDIA, extending classical CLF and CBF theory beyond bounded-disturbance regimes~\cite{garg2021robust,kolathaya2018input}.
\end{itemize}
\section{Preliminaries on CBFs and CLFs}
\textbf{Notation:} 
For \(x\in\mathbb{R}^n\), \(\|x\|\) denotes the Euclidean norm. A continuous, strictly increasing function \(\alpha:[0,a)\to[0,\infty)\) with \(\alpha(0)=0\) is of class \(\mathcal K_{[0,a)}\); if \(a=\infty\) and \(\alpha\) is unbounded, then \(\alpha\in\mathcal K_\infty\).
For a continuously differentiable scalar function $\phi$, the Lie derivatives are $L_f \phi := \nabla \phi^\top f, \;
L_g \phi := \nabla \phi^\top g$. The abbreviation ``a.e.'' means almost everywhere with respect to Lebesgue measure. A signal \(w:[0,\infty)\to\mathbb R^q\) is called locally essentially bounded if it is essentially bounded on every finite interval. 

\vspace{-3mm}
\subsection{Control-Affine Systems Under Control-Input FDIA}
We consider nonlinear control-affine systems subject to malicious data corruption in the controller-to-actuator channel:
\begin{equation}
    \dot{x}=f(x)+g(x)\big(u(t)+d(t)\big),
    \label{eq:system_attack}
\end{equation}
where \(x\in D\subset\mathbb{R}^n\), \(f\) and \(g\) are locally Lipschitz, and \(u(t)\in\mathbb{R}^m\) is the commanded control input computed by the controller. The signal \(d(t)\in\mathbb R^m\) represents an unknown malicious component injected in the controller-to-actuator channel, so that the actuator receives $u_{\rm act}(t)=u(t)+d(t)$, and the plant evolves as \(\dot x=f(x)+g(x)u_{\rm act}(t)\). We refer to this specific data-integrity corruption as a \emph{control-input FDIA}, and it is not intended to cover all cyberattack mechanisms.

\begin{assumption}
\label{ass:unbounded}
The injected control-input FDIA
\(d:[0,\infty)\to\mathbb{R}^m\) is Lebesgue measurable and locally
essentially bounded. Moreover, \(d(t)\) has at most exponential growth: there exist finite
constants \(\gamma>0\) and \(\kappa\ge0\) such that $\|d(t)\|\le \gamma e^{\kappa t},\; \text{for a.e. } t\ge0$.
\end{assumption}
\begin{remark}
Assumption~\ref{ass:unbounded} serves as a conservative growth-rate condition for the Lyapunov and barrier analysis, not as an exact attack model. It includes bounded, polynomially growing, exponentially growing, and discontinuous measurable attacks satisfying the exponential envelope, while excluding impulsive, locally unbounded, finite-escape-time, super-exponential, and nonmeasurable signals; solutions and inequalities are interpreted a.e.
\end{remark}

\vspace{-4mm}
\subsection{Control Lyapunov Functions}
\begin{definition}
Consider the nominal control-affine system associated with 
\eqref{eq:system_attack} in the absence of adversarial attacks. 
A continuously differentiable function 
$V:\mathbb{R}^n \to \mathbb{R}_{\ge 0}$ 
is called a \emph{CLF} on an open set 
$\mathcal{D} \subseteq \mathbb{R}^n$ if there exists a control set 
$\mathcal{U}$ and a class-$\mathcal{K}_\infty$ function $\alpha_V$ such that 
for all $x \in \mathcal{D}$ with $x \neq 0$,
\begin{equation}
\inf_{u \in \mathcal{U}}
\big[ L_f V(x) + L_g V(x) u \big]
\le -\alpha_V\big(V(x)\big).
\label{eq:clf_def}
\end{equation}
\end{definition}
\vspace{-4mm}
\subsection{Control Barrier Functions}
Let a safe set be defined as $\mathcal{C} := \{x \in \mathbb{R}^n : h(x) \ge 0\}$, for a continuously differentiable function $h:\mathbb{R}^n \rightarrow \mathbb{R}$. The boundary and interior of the safe set are given by
$\partial \mathcal{C} := \{x : h(x)=0\}$ and 
$\mathrm{Int}(\mathcal{C}) := \{x : h(x)>0\}$, respectively. 
Forward invariance of \(\mathcal C\) means
\(x(t_0)\in\mathcal C \Rightarrow x(t)\in\mathcal C\) for all \(t\ge t_0\). Following \cite{kolathaya2018input}, we define the domain $\mathcal{D} := \{x \in \mathbb{R}^n : h(x) + b \ge 0\}$, where $b>0$ is chosen such that $\mathcal{C} \subset \mathcal{D}$.

\begin{definition}
Consider the nominal control-affine system $\dot{x}=f(x)+g(x)u$, with \(x\in\mathcal{D}\subseteq\mathbb{R}^n\) and \(u\in\mathcal{U}\subseteq\mathbb{R}^m\). The function \(h\) is called a \emph{CBF} on \(\mathcal{D}\) if there exists an extended
class-\(\mathcal{K}\) function \(\alpha_h\) such that, for all \(x\in\mathcal{D}\),
\begin{equation}
    \sup_{u\in\mathcal{U}}
    \big[L_fh(x)+L_gh(x)u\big]
    \ge -\alpha_h(h(x)).
    \label{eq:cbf_def}
\end{equation}
Equivalently,
\(K_{\rm cbf}(x):=\{u\in\mathcal U:
L_fh(x)+L_gh(x)u\ge-\alpha_h(h(x))\}\neq\emptyset\).
Any feedback \(u(x)\in K_{\rm cbf}(x)\) with well-defined closed-loop
solutions renders \(\mathcal C\) forward invariant under the nominal dynamics.
\end{definition}


In the nominal CLF/CBF definitions, \(\mathcal U\subseteq\mathbb R^m\)
constrains the commanded input. Under the attacked dynamics
\eqref{eq:system_attack}, the physical actuator instead receives
\(u_{\rm act}(t)=u(t)+d(t)\); hence, for compact actuator limits, admissibility must be imposed on \(u_{\rm act}\). Thus, compact-input
safety guarantees are conditional on feasibility of the hard CBF constraint
together with \(u_{\rm act}(t)\in\mathcal U\).

Nominal CBF conditions ensure forward invariance of \(\mathcal C\), but may
fail under attacks. Following the ISSf viewpoint~\cite{kolathaya2018input},
we define the enlarged set
$\mathcal C_{\varepsilon}:=
\{x\in\mathbb R^n:h(x)+\varepsilon\ge0\},\; \varepsilon\ge0$,
to quantify transient safety degradation.
\section{AR-CLFs and AR-CBFs}

In practical systems, disturbances and cyber attacks can degrade safety guarantees. Existing ISSf and ISSf-CBF approaches \cite{kolathaya2018input,garg2021robust} ensure invariance of an enlarged safe set under bounded disturbances, but rely on known bounds and fail under adversarial, potentially unbounded control-input attacks \eqref{eq:system_attack}. Motivated by these limitations, we introduce AR-CLFs and AR-CBFs, which extend classical stability and safety guarantees to adversarial settings. In contrast to disturbance-bound–based formulations, the proposed framework ensures finite-time recovery to the nominal safe set and invariance thereafter. This objective is formalized via UUS.

\begin{definition}[UUB \cite{khalil2002nonlinear}]
\label{def: UUB}
A signal $x(t)$ is said to be UUB with an ultimate bound $b$ if there exist positive constants $b$ and $c$, independent of the initial time ${t_0} \geq 0$, such that for every $a \in (0, c)$, there exists a time $t_1 = t_1(a, b) \geq 0$, also independent of $t_0$, satisfying the following condition: if $\left\|x(t_0)\right\| \leq a$, then $\left\|x(t)\right\| \leq b$ for all $t \geq t_0 + t_1$.
\end{definition}

\begin{definition}[Uniform Ultimate Safety (UUS)]
Consider \eqref{eq:system_attack} under a controller \(u(x,t)\), with
\(x(t)\in D\subseteq\mathbb R^n\). Let
\(\mathcal C:=\{x\in D:h(x)\ge0\}\) and
\(\mathcal C_\varepsilon:=\{x\in D:h(x)+\varepsilon\ge0\}\), where
\(\varepsilon\ge0\). The set \(\mathcal C\) is UUS on \(D\) if, for every
compact \(X_0\subset \mathcal C\), there exist finite constants
\(\varepsilon=\varepsilon(X_0)\ge0\) and
\(T_{\rm uus}=T_{\rm uus}(X_0)\ge0\), independent of \(t_0\) and
\(x(t_0)\in X_0\), such that every closed-loop trajectory exists for all
\(t\ge t_0\) and satisfies $h(x(t))\ge-\varepsilon,\; t\in[t_0,t_0+T_{\rm uus}), \; h(x(t))\ge0,\; t\ge t_0+T_{\rm uus}$. Thus, trajectories may undergo a bounded transient excursion within the enlarged set \(\mathcal{C}_{\varepsilon}\), but must recover to the nominal safe set \(\mathcal{C}\) in a finite time and remain in \(\mathcal{C}\) thereafter.
\label{def:UUS}
\end{definition}
\vspace{-7mm}
\begin{remark}
In Definition~\ref{def:UUS}, \(\mathcal C_\varepsilon\) is only a transient
set; UUS is finite-time recovery to the nominal safe set $\mathcal{C}$, not merely enlarged-set invariance. The AR-CBF theorem establishes UUS for the attacks in Assumption~\ref{ass:unbounded}.
\end{remark}


To ensure UUB stability and UUS for the nonlinear control-affine systems \eqref{eq:system_attack} under unbounded control-input FDIA, we embed a unified online adaptation-based compensation into both the CLF and CBF conditions, leading to AR-CLF and AR-CBF frameworks.
\vspace{-3mm}
\subsection{Proposed AR-CLF Framework}
Given the attacked system \eqref{eq:system_attack} under control-input FDIA satisfy Assumption~\ref{ass:unbounded}, we augment the nominal CLF condition with the proposed attack-compensation term and
consider the following Lyapunov inequality for the attacked system
\begin{equation}
\inf_{u_{\rm act}\in\mathcal{U}}
\Big[L_f V(x) + L_g V(x) \; u + \Psi_V(x,t)\Big]
\le - C V(x),
\label{eq:AR_CLF_structure}
\end{equation}
where constant $C>0$, and $\Psi_V(x,t)$ is an attack-compensation signal that counteracts the influence of adversarial perturbations on the control-input channel. To construct
such a compensation mechanism, we introduce the following
compensation term for the Lyapunov channel
\begin{equation}
\Psi_V(x,t)
:= 
\frac{(L_g V)(L_g V)^\top}{\|L_g V\| + \varphi_V(t)} e^{\rho(t)},
\label{eq:psi_v}
\end{equation}
where $\varphi_V(t):\mathbb{R}_{\ge0}\to\mathbb{R}_{>0}$ is a continuously
differentiable, strictly positive, monotonically decreasing function satisfying
$\lim_{t\to\infty}\varphi_V(t)\times \gamma e^{\kappa t}=0$. This term acts as a vanishing regularization
signal that prevents singular behavior when $\|L_g V(x)\|$ becomes small during
transient phases. The gain $\rho(t)$ is generated by the following online adaptation tuning law
\begin{equation}
\dot{\rho}(t)
=
q\,\|L_g V(x)\|,
\qquad
\rho(0)=\rho_0\ge0,
\label{eq:rho_dyn}
\end{equation}
where $q>0$ is a design constant. This mechanism dynamically regulates the attack-compensation term via the proposed online gain tuning law, allowing the controller to counteract time-varying and potentially unbounded adversarial FDIA on the control-input channel.

\begin{theorem}
\label{thm:ar_clf}
Consider the attacked system \eqref{eq:system_attack}. If there exists a continuously differentiable positive definite function 
$V$, and a control input $u(x,t)$ such that the AR-CLF \eqref{eq:AR_CLF_structure}
is satisfied with the proposed attack-compensation term \eqref{eq:psi_v},
then the closed-loop trajectories are UUB under control-input FDIA satisfying Assumption~\ref{ass:unbounded}. 
\end{theorem}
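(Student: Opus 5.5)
We will work with the augmented state $(x,\rho)$ and the composite candidate function
\[
W(x,\rho,t) = V(x) + \frac{1}{q}\,\gamma e^{\kappa t}\cdot 0 + \frac{1}{2q}\big(\rho - \hat\rho(t)\big)^2 ,
\]
but it is simpler to begin by analysing $V$ directly along \eqref{eq:system_attack}. Under the AR-CLF inequality \eqref{eq:AR_CLF_structure} with the designed input,
\begin{equation*}
\dot V = L_fV + L_gV\,u + L_gV\,d \le -C V - \Psi_V(x,t) + L_gV\, d .
\end{equation*}
The central step is to show that $\Psi_V$ dominates the attack term $L_gV\,d$ up to a vanishing remainder. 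By Cauchy--Schwarz and Assumption~\ref{ass:unbounded}, $L_gV\,d \le \|L_gV\|\gamma e^{\kappa t}$ a.e. Writing $s:=\|L_gV\|$, we have $\Psi_V = s^2 e^{\rho}/(s+\varphi_V)$, so
\[
s\gamma e^{\kappa t} - \frac{s^2 e^{\rho}}{s+\varphi_V}
= \frac{s\big(\gamma e^{\kappa t}(s+\varphi_V) - s e^{\rho}\big)}{s+\varphi_V}.
\]
Whenever $e^{\rho(t)}\ge \gamma e^{\kappa t}$, this is at most $\gamma e^{\kappa t}\varphi_V\, s/(s+\varphi_V)\le \gamma e^{\kappa t}\varphi_V(t)$, which is bounded and tends to zero by the assumed property of $\varphi_V$. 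Hence on such intervals $\dot V\le -CV+\varepsilon(t)$ with $\varepsilon(t):=\gamma e^{\kappa t}\varphi_V(t)$ bounded by some $\bar\varepsilon$, and the comparison lemma yields $V(t)\le V(t_0)e^{-C(t-t_0)}+\bar\varepsilon/C$. Combining this with class-$\mathcal K_\infty$ bounds $\alpha_1(\|x\|)\le V\le\alpha_2(\|x\|)$, which hold on compact sets since $V$ is positive definite and radially unbounded, we obtain a UUB bound $b=\alpha_1^{-1}(2\bar\varepsilon/C)$ reached after a time $t_1$ that depends only on $a$.

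The main obstacle is to guarantee that the gain eventually satisfies $e^{\rho}\ge\gamma e^{\kappa t}$, i.e.\ $\rho(t)\ge \ln\gamma+\kappa t$, without knowing $\gamma$ or $\kappa$. We will argue by cases on the adaptation law \eqref{eq:rho_dyn}. If $\int_{t_0}^\infty \|L_gV\|\,dt=\infty$, then $\rho$ grows; but to outpace $\kappa t$ we need a linear-rate lower bound on $\rho$, so we will show that while $e^{\rho}<\gamma e^{\kappa t}$, the bad-case estimate $\dot V\le -CV+s\gamma e^{\kappa t}$, together with the monotone growth of $\rho$, forces $s$ to stay of order at least $\kappa/q$ on average. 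Otherwise $\rho$ stalls and $V$ would blow up, contradicting the CLF decrease wherever $s$ is small, since with $L_gV=0$ the attack does not enter $\dot V$ at all. If instead $\int\|L_gV\|<\infty$, then the attack's total effect $\int s\gamma e^{\kappa t}$ must be controlled separately. Here we will try to use the fact that small $s$ makes the attack channel negligible, i.e.\ the regions of state space where $L_gV\approx0$ are exactly where $\dot V\le -CV$ nearly holds.

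We expect this second case to be the genuinely delicate part. The plan is to define the crossing time $T^\ast:=\inf\{t:\rho(t)\ge\ln\gamma+\kappa t\}$ and show that the transient bound on $[t_0,T^\ast]$ is finite. Finiteness of $T^\ast$ itself may require $\kappa$ to be small relative to $q\inf s$ on the region outside the ultimate ball. Beyond $T^\ast$, monotonicity of $\rho$ and the continued growth argument keep the dominance condition in force, and the comparison bound above completes the UUB claim.
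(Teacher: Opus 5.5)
Your post-domination estimate is correct: whenever $e^{\rho}\ge\gamma e^{\kappa t}$, the residual $\|L_gV\|\gamma e^{\kappa t}-\Psi_V$ is at most $\gamma e^{\kappa t}\varphi_V(t)$, which is bounded and vanishing. The gap is the step you flag yourself. You never establish a finite time after which $e^{\rho(t)}\ge\gamma e^{\kappa t}$, and no case analysis based only on the theorem's stated hypotheses can.

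\emph{Your two cases.} In your second case, $\int\|L_gV\|\,dt<\infty$ makes $\rho$ converge, so for $\kappa>0$ domination never occurs. Small $\|L_gV\|$ also does not make the attack negligible: $s(t)\sim e^{-\kappa t/2}$ is integrable while $s(t)\gamma e^{\kappa t}\to\infty$. In your first case, the ``on average'' claim needs large $V$ to force large $\|L_gV\|$, and nothing in the hypotheses provides that.

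\emph{A concrete obstruction.} Take $\dot x=e^{-x^2}(u+d)$ with $V=x^2$, so $\|L_gV\|\le\sqrt{2/e}$. An attack $d(t)=\gamma e^{\kappa t}$ with $\kappa>q\sqrt{2/e}$ keeps $e^{\rho}$ exponentially behind $\gamma e^{\kappa t}$. The input enforcing the AR-CLF with equality gives $\dot x\ge -Cx/2+e^{-x^2}\bigl(\gamma e^{\kappa t}-e^{\rho(t)}\bigr)$ for $x>0$, so $x(t)$ is unbounded.

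\emph{Persistence after $T^\ast$.} Your persistence claim also fails: monotonicity of $\rho$ gives persistence only when $\kappa=0$. For $\kappa>0$, global domination is self-defeating. Since $V$ is $C^1$ with a minimum at the origin, $L_gV(0)=0$. If domination held for all $t\ge T^\ast$, your bound $\dot V\le -CV+\varepsilon(t)$ with $\varepsilon(t)\to0$ would force $x(t)\to0$. Then $\dot\rho=q\|L_gV(x(t))\|\to0$ and $\rho(t)=o(t)$, contradicting $\rho(t)\ge\ln\gamma+\kappa t$. Relatedly, your $t_1$ inherits $T^\ast$, so you cannot say it depends only on $a$.

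\emph{What the paper does instead.} The missing ingredient is to localize domination away from where $L_gV$ is small. The paper defines $\Upsilon_V=\{x:\|L_gV(x)\|\le\epsilon_V\}$ with $q\epsilon_V>\kappa$. It adds the structural hypothesis $\Upsilon_V\subset B_{\mu_V}$, so $\|x\|>\mu_V$ implies $\dot\rho\ge q\epsilon_V>\kappa$. This is the precise form of your remark about $q\inf s$ outside the ultimate ball, and it is exactly what fails in the example above. The paper then explicitly \emph{requires}, rather than derives, a finite $t_1$ with $e^{\rho}\ge\gamma e^{\kappa t}(1+\bar\varphi_V/\epsilon_V)$ whenever $t\ge t_1$ and $x(t)\notin\Upsilon_V$. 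This gives $\dot V\le -CV$ only for $\|x\|>\mu_V$, which is the form used with Khalil's Theorem~4.18, and yields the ultimate bound $\alpha_1^{-1}(\alpha_2(\mu_V))$. Because no domination is demanded inside $\Upsilon_V$, the contradiction above does not arise.

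To complete your argument you need three things: the same structural hypothesis; either a proof or an explicit assumption of finite-time domination outside $\Upsilon_V$; and a ``$\dot V<0$ outside a ball'' argument in place of your global comparison bound. Drop the abandoned composite $W$, with its undefined $\hat\rho$ and vacuous $\cdot\,0$ term.
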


\begin{proof}[Proof of Theorem~\ref{thm:ar_clf}]
Under the attacked dynamics \eqref{eq:system_attack},
\[
\dot V(x)=L_fV(x)+L_gV(x)\big(u+d(t)\big).
\]
Using the compensated CLF condition \eqref{eq:AR_CLF_structure}, the Cauchy--Schwarz inequality, and Assumption~\ref{ass:unbounded}, we obtain
\begin{equation}
\label{eq:Vdot_bound_revised}
\dot V(x)
\le
-CV(x)
-\frac{\|L_gV(x)\|^2}{\|L_gV(x)\|+\varphi_V(t)}e^{\rho(t)}
+\|L_gV(x)\|\gamma e^{\kappa t}.
\end{equation}
The last two terms in \eqref{eq:Vdot_bound_revised} are nonpositive if
\begin{equation}
\label{suffient-condition-revised}
\|L_gV(x)\|\big(e^{\rho(t)}-\gamma e^{\kappa t}\big)
\ge
\varphi_V(t)\gamma e^{\kappa t}.
\end{equation}
For instance, choosing $\varphi_V(t)=e^{-\alpha t^2}$, the term
$\varphi_V(t)\times \gamma e^{\kappa t}=\gamma e^{-\alpha t^2+\kappa t}$ may exhibit a bounded transient increase, but remains uniformly bounded and decays to zero as $t\to\infty$. Define closed set $\Upsilon_V:=\{x\in D:\|L_gV(x)\|\le \epsilon_V\}$, where \(\epsilon_V>0\) is chosen such that \(q\epsilon_V>\kappa\). Since Lyapunov function \(V\) is smooth and \(g\) is locally Lipschitz, \(L_gV(x)\) is continuous, and
\(\Upsilon_V\) is closed in \(D\). We assume that \(\Upsilon_V\) is contained in a finite-radius ball, i.e., there exists \(\mu_V>0\) such that $\Upsilon_V\subset B_{\mu_V}$.
Equivalently,
\[
\|x\|>\mu_V
\quad\Longrightarrow\quad
\|L_gV(x)\|>\epsilon_V.
\]
Hence, whenever \(\|x(t)\|>\mu_V\), by \eqref{eq:rho_dyn}, gives $\dot\rho(t)\ge q\epsilon_V>\kappa$. Thus, outside \(B_{\mu_V}\), the gain grows faster than the exponential attack
envelope rate. To explicitly close the gain-growth loop, we require that the
cumulative gain reaches the domination level after a finite time, i.e., there
exists \(t_1>0\) such that
\[
e^{\rho(t)}
\ge
\gamma e^{\kappa t}
\left(1+\frac{\bar\varphi_V}{\epsilon_V}\right),
\qquad
\forall t\ge t_1,\quad x(t)\notin\Upsilon_V,
\]
where \(\bar\varphi_V:=\sup_{t\ge0}\varphi_V(t)\). Since
\(x(t)\notin\Upsilon_V\) implies \(\|L_gV(x(t))\|>\epsilon_V\), this guarantees
\eqref{suffient-condition-revised}. Therefore, for all \(t\ge t_1\) and all
\(x(t)\notin\Upsilon_V\),
\[
\dot V(x(t))\le -CV(x(t))<0.
\]
Using \(\Upsilon_V\subset B_{\mu_V}\), we equivalently obtain
\begin{equation}
\label{eq:Vdot_negative_ball}
\dot V(x(t))\le -CV(x(t)),
\qquad
\forall t\ge t_1,\quad \|x(t)\|>\mu_V.
\end{equation}

Since \(V\) is positive definite and proper, there exist
\(\alpha_1,\alpha_2\in\mathcal K_\infty\) such that $\alpha_1(\|x\|)\le V(x)\le \alpha_2(\|x\|)$. Thus, \eqref{eq:Vdot_negative_ball} gives
\[
\dot V(x(t))
\le
-C\alpha_1(\|x(t)\|),
\qquad
\forall t\ge t_1,\quad \|x(t)\|>\mu_V.
\]
Hence, for all \(t\ge t_1\), \(\dot V\) is negative outside the norm ball
\(B_{\mu_V}\). By the standard UUB result in
\cite[Theorem~4.18]{khalil2002nonlinear}, together with $\alpha_1(\|x\|)\le V(x)\le \alpha_2(\|x\|)$,
the closed-loop trajectories are UUB with ultimate bound $b_V=\alpha_1^{-1}\!\left(\alpha_2(\mu_V)\right)$. Thus, every trajectory ultimately enters and remains in \(B_{b_V}\). If, in the
bounded-attack case, the corresponding limiting invariant set reduces to the
origin, then Barbalat's Lemma further yields asymptotic convergence
to the equilibrium \cite[Lemma~8.2]{khalil2002nonlinear}. This completes the
proof.
\label{proof:ar_clf}
\end{proof}

\vspace{-6mm}
\subsection{Proposed AR-CBF Framework}
Consider the attacked system \eqref{eq:system_attack} under attacks satisfy Assumption~\ref{ass:unbounded}, we augment the nominal CBF condition with the proposed attack-compensation term,
leading to the following AR-CBF framework for the attacked system:
\begin{equation}
\label{eq:AR_CBF_condition}
\sup_{u_{\rm act} \in \mathcal{U}}
\big[L_f h(x) + L_g h(x) u
- \Psi_h(x,t)\big]
\ge - \lambda\, h(x),
\end{equation}
where $\lambda>0$ is a design constant, and $\Psi_h(x,t)$ is an
attack-compensation signal that mitigates the adverse effects of the adversarial
perturbations on the control-input channel. To construct such a compensation
mechanism, we introduce the following compensation term for the barrier
channel
\begin{equation}
\Psi_h(x,t)
:=
\frac{(L_g h(x))(L_g h(x))^{\!\top}}
{\|L_g h(x)\| + \varphi_h(t)}\,e^{\eta(t)},
\label{eq:psi_h}
\end{equation}
where the gain $\eta(t)$ is generated by the following online adaptation tuning law
\begin{equation}
\dot{\eta}(t)
=
p\,\|L_g h(x)\|,
\qquad
\eta(0)=\eta_0\ge0,
\label{eq:eta_dyn}
\end{equation}
where $p>0$ is a design constant. Based on this construction, we establish the following result, which certifies that the proposed AR-CBF framework guarantees that the set $\mathcal{C}$ is UUS in the sense of Definition~\ref{def:UUS}.


\begin{theorem}
\label{thm:ar_cbf}
Consider the attacked system \eqref{eq:system_attack} under
Assumption~\ref{ass:unbounded}. If there exists a
continuously differentiable function $h(x) \ge 0$ defined on a domain $\mathcal{D}$ with
$\mathcal{C}\subset\mathcal{D}\subseteq\mathbb{R}^n$, and a control input
$u(x,t)$ such that the proposed AR-CBF
condition \eqref{eq:AR_CBF_condition} holds for all $x\in\mathcal{D}$,
then $h$ constitutes an AR-CBF that renders the set $\mathcal{C}$ UUS in the sense of Definition~\ref{def:UUS}. In particular, for any compact set of initial conditions $X_0\subset\mathcal{C}$, there exists a finite time $T\ge0$,
independent of the initial condition, such that $h(x(t))\ge0$ for all
$t\ge T$ and all $x(0)\in X_0$. Moreover, during the transient interval
$t\in[0,T)$, trajectories remain bounded within an enlarged set $\mathcal{C}_\varepsilon$ for some $\varepsilon>0$.
\end{theorem}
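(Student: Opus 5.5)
We follow the template of the proof of Theorem~\ref{thm:ar_clf}, with $h$ in place of $V$ and the inequalities reversed.

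\textbf{Step 1: differential inequality.} Along \eqref{eq:system_attack},
\[
\dot h=L_fh+L_gh\,(u+d).
\]
Using the AR-CBF condition \eqref{eq:AR_CBF_condition}, Cauchy--Schwarz and Assumption~\ref{ass:unbounded}, we expect
\[
\dot h\ge -\lambda h+\frac{\|L_gh\|^2}{\|L_gh\|+\varphi_h(t)}e^{\eta(t)}-\|L_gh\|\gamma e^{\kappa t}\quad\text{a.e.}
\]
This lower bound presumes that the compensation sign convention pushes $\dot h$ upward; that convention has to be checked carefully. The last two terms are nonnegative whenever
\[
\|L_gh\|\big(e^{\eta(t)}-\gamma e^{\kappa t}\big)\ge \varphi_h(t)\gamma e^{\kappa t}.
\]
We impose on $\varphi_h$ the same decay assumption as on $\varphi_V$, so that $\varphi_h(t)\gamma e^{\kappa t}\to0$ and $\bar\varphi_h:=\sup_t\varphi_h(t)<\infty$.

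\textbf{Step 2: gain domination.} Introduce $\Upsilon_h:=\{x\in\mathcal D:\|L_gh(x)\|\le\epsilon_h\}$ with $p\epsilon_h>\kappa$. Outside $\Upsilon_h$, \eqref{eq:eta_dyn} gives $\dot\eta\ge p\epsilon_h>\kappa$, so $\eta(t)-\kappa t$ grows linearly there. Exactly as in the CLF proof, we postulate a finite time $t_1$ after which
\[
e^{\eta(t)}\ge \gamma e^{\kappa t}\big(1+\bar\varphi_h/\epsilon_h\big)\quad\text{whenever }x(t)\notin\Upsilon_h .
\]
This yields $\dot h\ge-\lambda h$ for $t\ge t_1$ off $\Upsilon_h$. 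On $\Upsilon_h$ the attack enters only through $\|L_gh\|\gamma e^{\kappa t}\le \epsilon_h\gamma e^{\kappa t}$. To handle it we must add a nondegeneracy assumption analogous to $\Upsilon_V\subset B_{\mu_V}$: namely $\Upsilon_h\cap\partial\mathcal C=\emptyset$, or more precisely $\Upsilon_h\cap\{-\varepsilon\le h\le 0\}=\emptyset$ for the $\varepsilon$ found below. Then near and below the boundary the domination inequality is active for $t\ge t_1$.

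\textbf{Step 3: transient bound on $[0,t_1)$.} Since $X_0$ is compact and $d$ is locally essentially bounded, on the finite window we have $\|d\|\le\gamma e^{\kappa t_1}$. Continuity of $L_fh$ and $L_gh$ on the compact reachable set bounds $|\dot h|$. Integrating from $h(x(0))\ge0$ gives $h(x(t))\ge-\varepsilon$ for some $\varepsilon=\varepsilon(X_0)$, uniformly in $x(0)\in X_0$; existence of solutions on this window follows from the same bound. This is the enlarged-set statement $x(t)\in\mathcal C_\varepsilon$.

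\textbf{Step 4: finite-time recovery.} For $t\ge t_1$ with $h<0$, pure $\dot h\ge-\lambda h=\lambda|h|$ gives only asymptotic approach to $\mathcal C$, not finite-time entry. We therefore exploit the strict margin instead of dropping it. Off $\Upsilon_h$ and for $t\ge t_1'\ge t_1$, the positive residual
\[
\|L_gh\|\big(e^{\eta}-\gamma e^{\kappa t}\big)-\varphi_h\gamma e^{\kappa t}
\]
is at least some $\delta>0$, since $e^{\eta}-\gamma e^{\kappa t}$ grows while $\varphi_h\gamma e^{\kappa t}\to0$. Hence $\dot h\ge\delta$ on $\{h<0\}$, and $h$ reaches $0$ within time $\varepsilon/\delta$. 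Set $T:=t_1'+\varepsilon/\delta$. Afterwards, on $\partial\mathcal C$ we have $\dot h\ge\delta>0$, so by Nagumo's theorem $\mathcal C$ is forward invariant. All constants depend only on $X_0$ and the attack parameters, giving independence from $t_0$ and $x(0)$ as required by Definition~\ref{def:UUS}.

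\textbf{Main obstacle.} Step 2 is the crux: establishing the existence of $t_1$ without circularly assuming it. $\eta$ grows only while $\|L_gh\|>\epsilon_h$, and a trajectory could in principle linger in $\Upsilon_h$. We would first try to show that the time spent in $\{h<0\}\setminus\Upsilon_h$ before domination is bounded, by comparing $\eta(t)-\kappa t$, which increases at rate at least $p\epsilon_h-\kappa$ there, with $\ln\gamma$. If that fails, we would state the domination condition as a standing hypothesis, as the CLF proof effectively does.
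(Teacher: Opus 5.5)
Your plan follows the paper's proof in all essentials. Steps 1--2 are Part I: the same bound \eqref{eq:hdot_refined}, the same sufficient condition \eqref{eq:cbf_sufficient_condition}, and a domination time. The paper likewise does not derive that time; it imports it from the CLF proof, where it is explicitly ``required''. Step 3 is Part II and Step 4 is Part III. The sign convention you flagged is correct, since \eqref{eq:AR_CBF_condition} gives $L_fh+L_ghu\ge-\lambda h+\Psi_h$.

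The differences are in execution. On the transient window, the paper completes the square, $\frac{a^2e^{\eta}}{a+\varphi_h}-a\|d\|\ge-\frac{a+\varphi_h}{4e^{\eta}}\|d\|^2$ with $a=\|L_gh\|$, to reach the ISSf form $\dot h\ge-\lambda h-\bar\varepsilon_{t_2}$. Your integration argument is cruder but suffices, with one slip: continuity of $L_fh$ and $L_gh$ does not bound $|\dot h|$, because the AR-CBF condition bounds $L_ghu$ only from below. Use the one-sided estimate $\dot h\ge-\lambda h-\|L_gh\|\gamma e^{\kappa t_1}$ with the comparison lemma instead. As in the paper, this still needs $\|L_gh(x(t))\|$ bounded on the window.

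For recovery, the paper asserts the uniform margin \eqref{eq:domination_with_margin}, obtains $\dot h\ge-\lambda(h-l)$, and gets $T\le t_2+\frac{1}{\lambda}\ln 2$ by equating $l$ with $\bar\varepsilon_{t_2}/\lambda$. You instead extract a margin only off $\Upsilon_h$ and use a constant rate. That works once you note that compensation minus attack equals $\frac{\|L_gh\|}{\|L_gh\|+\varphi_h}$ times your residual, so the rate is $\frac{\epsilon_h}{\epsilon_h+\bar\varphi_h}\delta$, not $\delta$.

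Your added hypothesis that $\Upsilon_h$ misses $\{-\varepsilon\le h\le 0\}$ is not in the statement. It is, however, a localized and explicit form of something the paper uses silently. Indeed, \eqref{eq:domination_with_margin} fails at any state with $L_gh(x)=0$, where its left side is $0$ and its right side is $\lambda l>0$. So your version gives the more honest account of where domination can actually hold.

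Like the paper, though, your argument does not establish three things: the domination time, the uniformity of $t_1$ and $\varepsilon$ over $X_0$, and independence from $t_0$. The last fails to follow because the envelope $\gamma e^{\kappa t}$ is anchored at $t=0$. Your closing sentence claiming that independence is therefore unsupported.
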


\begin{proof}[Proof of Theorem~\ref{thm:ar_cbf}]
\textbf{Part I}: After substituting \eqref{eq:AR_CBF_condition} into the derivative of h:
\begin{equation}
\label{eq:derivative h}
\begin{aligned}
\dot h(x)
&= L_f h(x) + L_g h(x)\,(u + d(t)) \\
&\ge
- \lambda\, h(x) + \Psi_h(x,t) + L_g h(x)\, d(t).
\end{aligned}
\end{equation}

Using the Cauchy--Schwarz inequality, $L_g h(x(t))d(t)\ge -\|L_g h(x)\|\,\|d(t)\|$, and Assumption~\ref{ass:unbounded} we obtain
\begin{align}
\dot h(x) \ge{}& -\lambda h(x(t))
-\|L_g h(x)\|\,\|\gamma e^{\kappa t}\| +\Psi_h(x,t).
\label{eq:hdot_refined}
\end{align}
{\color{blue}By choosing $\varphi_h(t)=e^{-\alpha t^2}$}, and rearranging the last two terms similarly to the CLF case (cf. proof of Theorem~\ref{thm:ar_clf}),
we obtain a sufficient condition for $\dot h(x)\ge -\lambda h(x)$ given by
\begin{equation}
\|L_g h(x)\|\big(e^{\eta(t)}-\gamma e^{\kappa t}\big)
\ge
\varphi_h(t)\gamma e^{\kappa t}.
\label{eq:cbf_sufficient_condition}
\end{equation}

Define the closed set $\Upsilon_h:=\{x:\|L_g h(x)\|\le \epsilon_h\}$,
where $\epsilon_h>0$ is chosen such that $p\epsilon_h>\kappa$. 
Whenever \(x(t)\notin\Upsilon_h\), the gain update law gives $ \dot{\eta}(t)=p\|L_gh(x(t))\|\ge p\epsilon_h>\kappa$,
so the barrier compensation grows faster than the exponential attack envelope along the trajectory. This closes the coupling between the state evolution and the adaptive barrier gain in the same trajectory-based sense as in the AR-CLF proof. Hence, by the same domination argument as in the CLF proof, there exists $t_2>0$ such that
\eqref{eq:cbf_sufficient_condition} holds for all $t\ge t_2$ and $x(t)\notin\Upsilon_h$. Therefore, $\dot h(x)\ge -\lambda h(x), \; \forall t\ge t_2,\ x(t)\notin\Upsilon_h$, which ensures forward invariance of the nominal safe set $\mathcal{C}$ for all subsequent times.

\noindent\textbf{Part II: Transient safety on $[0,t_2)$.}
Fix a finite time $t_2>0$ (specified in Part~I) and consider the interval $t\in[0,t_2)$.
Starting from \eqref{eq:hdot_refined}, let
$a(t):=\|L_g h(x(t))\|$, $b(t):=\|d(t)\|$.
Since $a(t)$, $\varphi_h(t)$, and $e^{\eta(t)}$ are nonnegative and continuous on $[0,t_2)$, define the positive gain $\sigma(t):=\frac{a(t)+\varphi_h(t)}{e^{\eta(t)}}>0$, so that $\frac{a(t)^2}{a(t)+\varphi_h(t)}\,e^{\eta(t)}=\frac{a(t)^2}{\sigma(t)}$. Then the attack--compensation channel in \eqref{eq:hdot_refined} can be written as $\frac{a(t)^2}{\sigma(t)}-a(t)b(t)$. Completing the square and substituting back, gives 
\begin{align*}
-\|L_g h(x(t))\|\,\|d(t)\|
+\frac{\|L_g h(x(t))\|^2}{\|L_g h(x(t))\|+e^{-\alpha t^2}}\,e^{\eta(t)}
&\ge \\ 
-\frac{\|L_g h(x(t))\|+e^{-\alpha t^2}}{4e^{\eta(t)}}\,\|d(t)\|^2 .
\label{eq:pre_t1_square_bound}
\end{align*}

Applying obtained one to \eqref{eq:hdot_refined} yields following for all $t\in[0,t_2)$. This manipulation follows the same bounding logic as Theorem 1 in \cite{kolathaya2018input}.
\begin{equation}
\dot h(x(t))
\ \ge\
-\lambda h(x(t))
-\frac{\|L_g h(x(t))\|+e^{-\alpha t^2}}{4e^{\eta(t)}}\,\|d(t)\|^2.
\label{eq:hdot_pre_t1_final}
\end{equation}


Continuity of \(d(t)\) is not required for this transient bound. By Assumption~\ref{ass:unbounded}, for any finite interval \([0,t_2]\), $\|d(t)\|\le \gamma e^{\kappa t_2}, \; 0\le t\le t_2$. Hence, even though \(d(t)\) may be globally unbounded as \(t\to\infty\), it admits a finite local bound on every finite time interval. In particular, define $\|d\|_{[0,t_2]}:=\operatorname*{ess\,sup}_{0\le s\le t_2}\|d(s)\| \le \gamma e^{\kappa t_2}<\infty$. Moreover, continuity of the closed-loop trajectory on \([0,t_2]\), together with
continuity of \(L_gh\), implies that \(\|L_gh(x(t))\|\) is bounded on this interval. Because $\eta(t)\ge\eta_0\ge 0$ and $\varphi_V(t)\le 1$, there exists a finite constant
$\bar\varepsilon_{t_2}>0$ such that
\[
\frac{\|L_g h(x(t))\|+\varphi_h(t)}{4e^{\eta(t)}}\,\|d(t)\|^2
\;\le\;
\bar\varepsilon_{t_2},
\qquad \forall t\in[0,t_2).
\]
Therefore, along the closed-loop trajectories on $[0,t_2)$, the barrier function satisfies
\begin{equation}
\dot h(x(t)) \ge -\lambda h(x(t))-\bar\varepsilon_{t_2},
\qquad \forall t\in[0,t_2).
\label{eq:hdot_pre_t1_issf}
\end{equation}
Inequality \eqref{eq:hdot_pre_t1_issf} is in the standard ISSf form \cite{kolathaya2018input} and implies forward invariance of the enlarged set $\mathcal{C}_{\bar\varepsilon_{t_2}}
:=\Big\{x\in\mathbb{R}^n \;\big|\; h(x)\ge -\tfrac{\bar\varepsilon_{t_2}}{\lambda}\Big\}$ over the transient interval $[0,t_2)$, in the sense of input-to-state safety
(\cite{alan2021safe,kolathaya2018input}). Thus, on the transient interval $t<t_2$, the closed-loop state remains within the enlarged set $\mathcal{C}_{\bar\varepsilon_{t_2}}$.

\textbf{Part III: Finite-time recovery and invariance}: Since the compensating term grows exponentially through the adaptive dynamics \eqref{eq:eta_dyn},
while the attack signal grows at most exponentially with rate $\kappa$ (Assumption~\ref{ass:unbounded}),
it follows that there exists a constant $l>0$ and a finite time $t_2$ such that the compensation term dominates the attack channel with a strict margin, i.e.,
\begin{equation}
\frac{\|L_g h(x)\|^2}{\|L_g h(x)\|+e^{-\alpha t^2}}\,e^{\eta(t)}
\ \ge\
\|L_g h(x)\|\,\|d(t)\| \;+\; \lambda l,
\;\; \forall t\ge t_2.
\label{eq:domination_with_margin}
\end{equation}

Substituting \eqref{eq:domination_with_margin} into \eqref{eq:hdot_refined} yields, for all $t\ge t_2$,
\begin{equation}
\dot h(x(t)) \ge -\lambda h(x(t)) + \lambda l
= -\lambda\big(h(x(t)) - l\big).
\label{eq:hdot_after_t1_margin}
\end{equation}

Applying the comparison lemma to \eqref{eq:hdot_after_t1_margin} gives
\[
h(x(t)) \;\ge\; l + \big(h(x(t_2)) - l\big)e^{-\lambda (t-t_2)},
\;\; \forall t\ge t_2.
\]
From Part~II, the trajectory satisfies $h(x(t_2)) \ge -l$, where
$l=\bar\varepsilon_{t_2}/\lambda$. Therefore, the worst-case initial condition
at $t_2$ is $h(x(t_2))=-l$, and hence $h(x(t)) \;\ge\; l\big(1 - 2e^{-\lambda (t-t_2)}\big)$. It follows that there exists a finite time $T \;\le\; t_2 + \frac{1}{\lambda}\ln 2$, such that $h(x(T))\ge 0$. Thus, the closed-loop trajectory re-enters the nominal safe set $\mathcal{C}=\{x\mid h(x)\ge 0\}$ in finite time. Once $h(x(t))\ge 0$, the AR-CBF constraint \eqref{eq:AR_CBF_condition} implies $\dot h(x(t)) \ge -\lambda h(x(t))$, which guarantees forward invariance of the nominal safe set $\mathcal{C}$ for all subsequent times. Therefore, the closed-loop system satisfies UUS of $\mathcal{C}$.
\end{proof}

\begin{remark}
The proof does not require \(d(t)\) to be globally bounded or continuous;
the transient estimate only uses the finite local bound implied by
Assumption~\ref{ass:unbounded}. Thus, the enlarged set is only transient:
after finite-time domination $t_2$, the AR-CBF condition drives the trajectory
back to the nominal safe set and keeps it there.
\end{remark}

\vspace{-3mm}
\section{Unified AR-CLF-CBF-QP}
\label{sec:unified_qp}
The preceding AR-CLF and AR-CBF constructions are unified by embedding the same adaptation-based compensation mechanism into both the CLF decrease and CBF safety constraints, yielding a single QP for coordinated resilient stabilization and safety enforcement. For a given state \(x\), the controller solves AR-CLF-CBF-QP:
\begin{equation}
\label{eq:unified_qp}
\begin{alignedat}{3}
u^*(x)
&= \arg\min_{{u_{\rm act}}\in\mathcal{U},\;\delta_{\mathrm{clf}}\in \mathbb{R}c}
&& \|u-u_{\mathrm{nom}}(x)\|^2 + \sigma \delta_{\mathrm{clf}}^2\\[0.3em]
\text{s.t.}\;
& L_fV(x)+L_gV(x)u
&& +\frac{(L_gV)(L_gV)^{\!\top}}
{\|L_gV\|+e^{-\alpha t ^2}}\,e^{\rho(t)} \\
& \phantom{L_fV}
&& \le -CV(x)+\delta_{\mathrm{clf}},
\; \text{\emph{(AR-CLF)}} \\[0.4em]
& L_fh(x)+L_gh(x)u
&&-\frac{(L_gh)(L_gh)^{\!\top}}
{\|L_gh\|+e^{-\alpha t^2}}\,e^{\eta(t)} \\
& \phantom{L_fh}
&& \ge -\lambda h(x),
\quad \quad \quad \text{\emph{(AR-CBF)}}
\end{alignedat}
\end{equation}
where $u_{\mathrm{nom}}(x)$ denotes a nominal stabilizing control input,
$\delta_{\mathrm{clf}}\in \mathbb{R}$ is a relaxation variable associated with the
AR--CLF constraint and weighted by $\sigma \in \mathbb{R^+}$. The functions
$(V,\rho(t))$ and $(h,\eta(t))$ correspond to the AR--CLF and AR--CBF constructions developed in \eqref{eq:AR_CLF_structure} and \eqref{eq:AR_CBF_condition}, respectively. The AR--CBF constraint is enforced as a hard constraint to ensure safety, while the AR-CLF condition is relaxed via the slack variable $\delta_{\mathrm{clf}}$ to guarantee feasibility of the QP while enforcing stability whenever possible. 
\begin{theorem}[Unified Resilient Safety and Stability]
\label{thm:unified}
Consider the attacked system \eqref{eq:system_attack} under Assumption~\ref{ass:unbounded}, with the control input generated by feasible \eqref{eq:unified_qp}. If $V(x)$ satisfies the AR--CLF condition \eqref{eq:AR_CLF_structure} and $h(x)$ satisfies the AR--CBF condition \eqref{eq:AR_CBF_condition}, then the closed-loop system guarantees both resilient stability and safety. Specifically, the safe set $C$ is UUS in the sense of Definition~\ref{def:UUS}, and all closed-loop trajectories remain uniformly ultimately bounded for all $t \ge 0$, despite polynomially or exponentially unbounded control-input FDI attacks. If, in the bounded-attack case, \(\delta_{\rm clf}(t)\to0\) and no other invariant set exists in the zero-decrease set, then asymptotic convergence to the desired equilibrium follows.
\end{theorem}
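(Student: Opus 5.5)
The plan is to reduce the statement to Theorems~\ref{thm:ar_clf} and~\ref{thm:ar_cbf}. The key observation is that the feasible solution of \eqref{eq:unified_qp} is a single commanded input $u^*(x,t)$ that satisfies the hard AR-CBF constraint and the relaxed AR-CLF constraint at every time. I will proceed in three steps.

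\textbf{Safety.} Along closed-loop trajectories the adaptive gains $\rho(t)$ and $\eta(t)$ evolve by \eqref{eq:rho_dyn} and \eqref{eq:eta_dyn}. Both are nondecreasing and absolutely continuous, so the augmented system $(x,\rho,\eta)$ has Carathéodory solutions under Assumption~\ref{ass:unbounded}. This presumes local Lipschitz continuity of the QP solution in $x$, which follows from standard QP regularity when the constraint gradients are nonzero; I will state this as a standing assumption. Since the AR-CBF constraint in \eqref{eq:unified_qp} is exactly \eqref{eq:AR_CBF_condition} with $\varphi_h(t)=e^{-\alpha t^2}$, the control $u^*$ is an admissible input in Theorem~\ref{thm:ar_cbf}. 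Its Parts I--III then apply verbatim: domination of $\|L_gh\|\gamma e^{\kappa t}$ after $t_2$, an ISSf-type transient bound $\bar\varepsilon_{t_2}$ on $[0,t_2)$, and recovery by $T\le t_2+\frac1\lambda\ln 2$. Together these give UUS of $\mathcal C$. No interaction with the CLF constraint enters, because the CLF constraint never restricts the CBF inequality.

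\textbf{Stability.} I will redo the argument of Theorem~\ref{thm:ar_clf} with the slack retained. The analogue of \eqref{eq:Vdot_bound_revised} becomes $\dot V\le -CV+\delta_{\rm clf}(t)-\frac{\|L_gV\|^2}{\|L_gV\|+\varphi_V}e^{\rho}+\|L_gV\|\gamma e^{\kappa t}$. The domination step (growth $\dot\rho\ge q\epsilon_V>\kappa$ outside $\Upsilon_V\subset B_{\mu_V}$) is unchanged, so for $t\ge t_1$ and $\|x\|>\mu_V$ we get $\dot V\le -CV+\delta_{\rm clf}$. The hypothesis that $V$ satisfies \eqref{eq:AR_CLF_structure} means a CLF-feasible input exists. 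I will interpret this, as the statement does, as saying the optimal slack stays bounded, $\bar\delta:=\sup_t|\delta_{\rm clf}(t)|<\infty$. The justification is that the QP cost with weight $\sigma$ penalizes $\delta_{\rm clf}$ and the slack is needed only where the hard CBF constraint conflicts. Then $\dot V\le -C\alpha_1(\|x\|)+\bar\delta<0$ whenever $\|x\|>\max\{\mu_V,\alpha_1^{-1}(2\bar\delta/C)\}$. Invoking \cite[Theorem~4.18]{khalil2002nonlinear} as before yields UUB with bound $\alpha_1^{-1}(\alpha_2(\max\{\mu_V,\alpha_1^{-1}(2\bar\delta/C)\}))$.

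\textbf{Asymptotic case.} For bounded attacks I add the hypothesis $\delta_{\rm clf}\to0$. The compensation then dominates for all large $t$, which gives $\dot V\le -CV+o(1)$, and $V$ is bounded. Uniform continuity of $\dot V$ on the bounded trajectory combined with Barbalat's Lemma \cite[Lemma~8.2]{khalil2002nonlinear} drives the trajectory to the zero-decrease set. The assumption that this set contains no invariant set other than the equilibrium then gives convergence.

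The main obstacle is the stability step: controlling $\delta_{\rm clf}$. In general the hard CBF constraint can force arbitrarily large slack. Boundedness of the slack must therefore be extracted from the hypothesis that $V$ satisfies \eqref{eq:AR_CLF_structure}, which I read as compatibility of the two constraints, possibly only outside a compact set. My first attempt will be to show that the QP optimum has $\delta_{\rm clf}=0$ wherever some $u$ meets both constraints, and to treat the remaining region as bounded and absorbed into $\mu_V$.
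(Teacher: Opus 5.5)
Your proposal follows the paper's proof essentially step for step. UUS is inherited from Theorem~\ref{thm:ar_cbf} because the QP output satisfies the hard AR-CBF constraint. UUB comes from the domination argument of Theorem~\ref{thm:ar_clf} with the slack carried along, plus an assumed finite $\bar\delta$. The asymptotic claim is Barbalat's Lemma under $\delta_{\rm clf}\to0$. Your restriction of $\dot V\le -CV+\delta_{\rm clf}$ to $\|x\|>\mu_V$ before invoking \cite[Theorem~4.18]{khalil2002nonlinear} is more careful than the paper. The paper states that inequality for all $t\ge t_1$ and applies the comparison lemma directly, even though the domination in Theorem~\ref{thm:ar_clf} was only established outside $\Upsilon_V$.

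The one part to drop is your plan to \emph{derive} bounded slack. The paper never attempts this; it simply writes ``if $\delta_{\rm clf}(t)$ is bounded.'' Your proposed first attempt would also fail. With cost $\|u-u_{\rm nom}\|^2+\sigma\delta_{\rm clf}^2$, stationarity in $\delta_{\rm clf}$ makes the AR-CLF multiplier equal to $2\sigma\delta^*_{\rm clf}$. Hence $\delta^*_{\rm clf}=0$ holds exactly when the minimizer of $\|u-u_{\rm nom}\|^2$ under the AR-CBF constraint alone already satisfies the AR-CLF inequality. The existence of some other $u$ meeting both constraints with zero slack does not force $\delta^*_{\rm clf}=0$, because the quadratic penalty has zero marginal cost at zero slack. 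Optimality only yields $\sigma(\delta^*_{\rm clf})^2\le\|u_0-u_{\rm nom}\|^2$ for any jointly feasible $u_0$. Since the compensation term is of order $\|L_gV\|e^{\rho(t)}$ with $\rho$ increasing, such a $u_0$ generally has norm of order $e^{\rho(t)}$. Indeed, when only the CLF constraint is active, $\delta^*_{\rm clf}=v/(1+\sigma\|L_gV\|^2)$, where $v$ is the AR-CLF violation at $u_{\rm nom}$. This grows with $e^{\rho(t)}$ whenever $\|L_gV\|$ stays bounded away from zero. Bounded slack is therefore a genuine extra hypothesis, not a consequence of the others, and you should state it as one, exactly as the paper does.
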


\vspace{-2mm}

\begin{remark}
As in standard CLF--CBF-QPs \cite{kolathaya2018input,garg2021robust}, the AR-CBF constraint in \eqref{eq:unified_qp} is hard, while the AR-CLF constraint is softened by \(\delta_{\mathrm{clf}}\). Hence, safety is guaranteed only when the hard AR-CBF constraint is feasible. Under control-input FDIA, the actuator receives \(u_{\rm act}=u^\ast+d\); therefore, actuator limits must be imposed on the aggregate signal \(u_{\rm act}\). After the finite dominance time, the post-compensation residual is bounded, so bounded nominal QP commands yield bounded aggregate actuator inputs. For set \(\mathcal U\), the corresponding tightened admissible set must be nonempty. If this feasibility condition fails, no hard-constrained CBF-QP can guarantee safety without relaxing safety, modifying the safe set, or increasing control authority.
\end{remark}

\begin{proof}[Proof of Theorem~\ref{thm:unified}]
By feasibility of \eqref{eq:unified_qp}, the optimal input satisfies the hard
AR-CBF constraint; hence, Theorem~\ref{thm:ar_cbf} implies that
\(\mathcal C\) is UUS. For stability, the AR-CLF constraint and the
domination argument in Theorem~\ref{thm:ar_clf} imply that, for some
\(t_1>0\), $\dot V(x(t))\le -C V(x(t))+\delta_{\rm clf}(t),\;t\ge t_1$. If \(\delta_{\rm clf}(t)\) is bounded, let
\(\bar\delta:=\sup_{t\ge t_1}\delta_{\rm clf}(t)<\infty\). Then $\dot V(x(t))\le -C V(x(t))+\bar\delta$, and the comparison lemma gives UUB of the closed-loop state, with ultimate
bound enlarged according to \(\bar\delta/C\). If, in
the bounded-attack case, \(\delta_{\rm clf}(t)\to0\) and no other invariant set exists in the limiting zero-decrease set, asymptotic convergence to the
desired equilibrium follows by Barbalat's Lemma.
\end{proof}

\section{Case Studies / Numerical Examples}
\textbf{Example 1: Scalar Nonlinear System Under Unbounded Input Attacks}: We consider the following scalar nonlinear system
\begin{equation}
    \dot x=f(x)+g(x)\big(u+d(t)\big), 
    \qquad 
    f(x)=x,\;\; g(x)=x,
    \label{eq:scalar_dyn}
\end{equation}
 whose objective is to regulate $x\rightarrow 0$ while enforcing the safety constraint $ h(x)= 1 - x \ge 0$.
The regulation task is encoded by the CLF $V(x)=x^2,
\; L_fV = 2x^2,\; L_gV = 2x^2$ and $\dot\rho(t)=3|2x^2|$.
Safety is encoded through $h(x)=1-x,\;
L_fh=-x,\; L_gh=-x$, and $\dot\eta(t)=3|x|$. 
The scalar AR-CLF-CBF-QP is obtained by substituting these Lie derivatives
into \eqref{eq:unified_qp}, with cost \(\frac12u^2+5\delta_{\rm clf}^2\).
We evaluate the proposed AR-CLF-CBF-QP under two representative control-input FDI attack profiles. The first profile is a multi-stage attack with constant, sinusoidal, linear, and quadratic phases:
\[
\small
d^{(1)}(t)=
\begin{cases}
0, & t<5,\\
3, & 5\le t<8,\\
2+2\sin(2t), & 8\le t<12,\\
2.2+0.8(t-15), & 12\le t<15,\\
2.6+0.35(t-18)^2, & 15\le t<18,\\
0, & t\ge 18.
\end{cases}
\]
The second profile is intentionally more aggressive and combines a large
exponentially increasing attack with a discontinuous switching phase:
\vspace{-0.4em}
\[
\small
d^{(2)}(t)=
\begin{cases}
0, & t<5,\\
30+0.4e^{0.7(t-5)}, & 5\le t<10,\\
10+4\,\mathrm{sgn}\!\left(\sin(3t)\right), & 10\le t<14,\\
0, & t\ge 14.
\end{cases}
\]
The shaded regions indicate active attack intervals. 
Fig.~\ref{fig:case2_compare} compares the proposed AR-CLF-CBF-QP with the conventional CLF-CBF-QP under \(d^{(1)}(t)\). The conventional method violates the safety constraint \(x\le1\) after the attack starts and becomes unstable, whereas the proposed method keeps the state stable, bounded and within the nominal safe set, as shown in the inset. The input responses further show that the compensated actuator input remains bounded and returns toward its nominal behavior after the attack is removed.  Fig.~\ref{fig:case1_proposed} considers the more aggressive attack \(d^{(2)}(t)\) and illustrates UUS in Definition~\ref{def:UUS}. The state exhibits only a bounded transient excursion into the enlarged safe set, then re-enters the nominal safe set in finite time and remains there. This shows that the proposed AR-CLF-CBF mechanism recovers nominal safety after a finite transient, while the adaptive gains \(\rho(t)\) and \(\eta(t)\) increase online to dominate the attack effects.
\begin{figure}[t]
\centering
\includegraphics[width=0.49\textwidth]{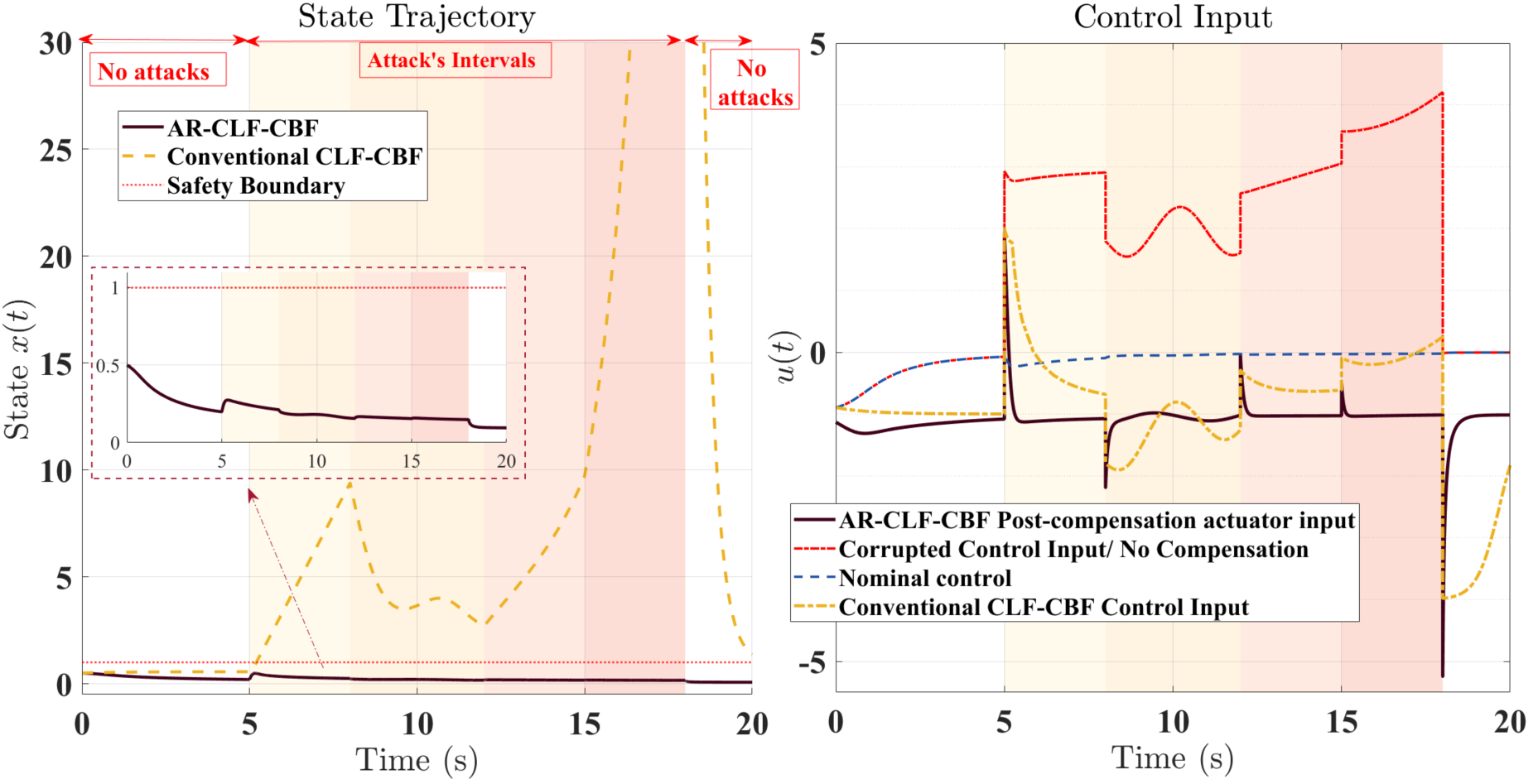}
\caption{Comparison between the proposed AR-CLF-CBF-QP and the conventional CLF-CBF-QP under the attack profile \(d^{(1)}(t)\).}
\label{fig:case2_compare}
\end{figure}
\begin{figure}[t]
\centering
\includegraphics[width=0.49\textwidth]{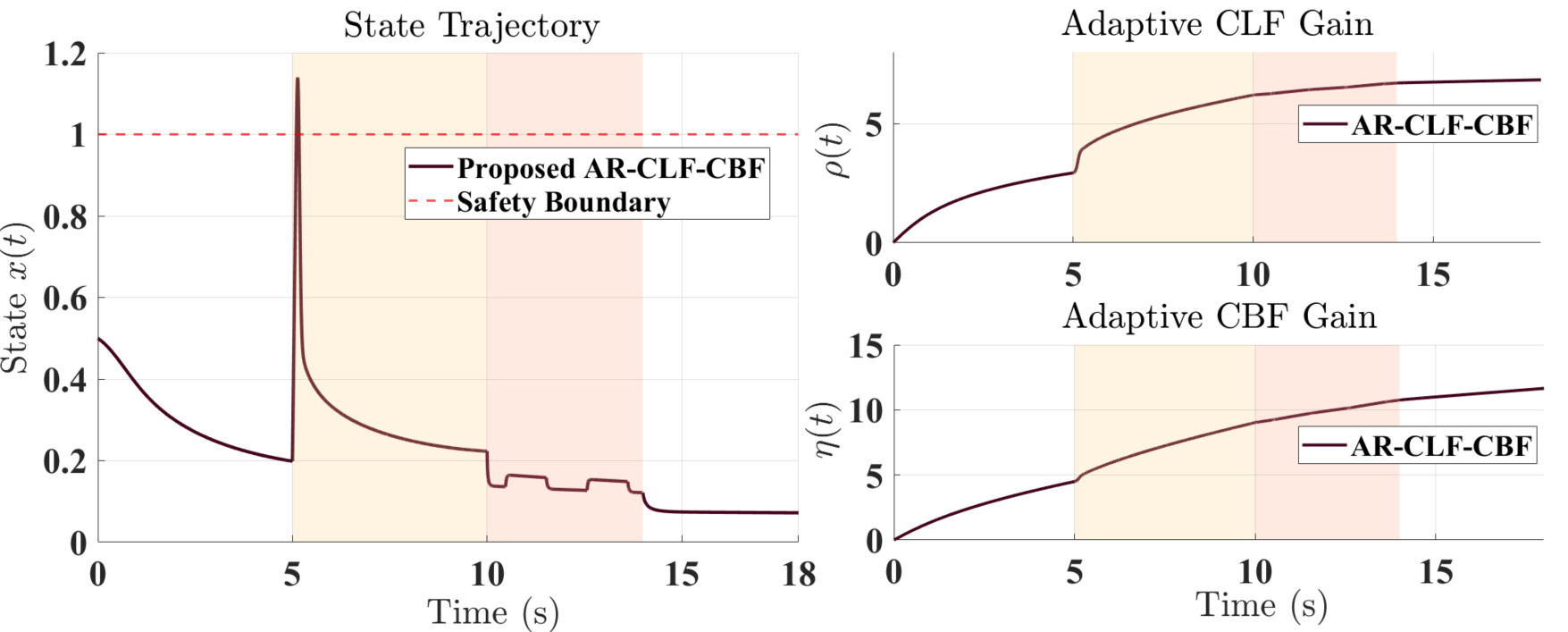}
\caption{Performance of the proposed AR-CLF-CBF-QP under the attack profile \(d^{(2)}(t)\).}
\label{fig:case1_proposed}
\end{figure}

\textbf{Example 2: 2-DOF Robot Under Input FDI Attacks:}
We validate the proposed AR--CLF--CBF-QP on the 2-DOF robot in
Fig.~\ref{Robot_Ex}, with revolute coordinate \(\theta\), prismatic coordinate
\(r\), input \(u=(\tau,T)\), and dynamics
\begin{equation}
D(q)\ddot q + C(q,\dot q) = u,
\qquad q = (\theta,r),
\label{eq:robot_dyn}
\end{equation}
\begin{equation*}
D(q)=
\begin{bmatrix}
m r^2 + \frac{M L^2}{3} & 0\\[2pt]
0 & m
\end{bmatrix},
\quad
C(q,\dot q)=
\begin{bmatrix}
2mr\dot r\dot\theta\\[2pt]
- mr\dot\theta^2
\end{bmatrix},
\end{equation*}
with \(m=M=1\) kg and \(L=3\) m. The objective is regulation to
\(q_d=(0,1.5)\) while enforcing \(r(t)\le r^\star=2\) m. We use the energy-based CLF $V(x)=(q-q_d)^\top K_p(q-q_d)+\dot q^\top D(q)\dot q$ where $K_p=K_d=I_2$,
with the AR--CLF constraint
\[
L_fV+L_gV\,u+
\frac{\|L_gV\|^2}{\|L_gV\|+e^{-\alpha t^2}}
e^{\rho(t)}
\le
-\dot q^\top K_d\dot q+\delta,
\]
Since the safety function \(h(x)=r^\star-r\) has relative degree two, we use
the AR--HOCBF constraint Following standard HOCBF constructions \cite{ames2014adaptive},
\[
L_f^2h+L_gL_fh\,u+
\frac{\|L_gL_fh\|^2}{\|L_gL_fh\|+e^{-\alpha t^2}}
e^{\eta(t)}
\ge
-k_p h-k_dL_fh,
\]
with \(k_p=1\) and \(k_d=1.73\). Then control input is obtained from the AR--CLF--CBF-QP with relaxation \(\delta\ge0\). We test two bounded and two unbounded FDIA on
the second input to compare different adversarial growth profiles. The ISSf-CBF method in~\cite{kolathaya2018input} is used as a disturbance-robust
baseline because it guarantees invariance of a disturbance-dependent enlarged safe set under bounded disturbances, whereas the proposed method aims to recover and maintain the nominal safe set under unbounded control-input FDIA. As shown in Fig.~\ref{Robot_Ex}, ISSf-CBF handles the bounded cases but fails under unbounded attacks, where \(r(t)\) violates the safety limit and grows rapidly. In contrast, the proposed AR--CLF--CBF-QP maintains \(r(t)\le r^\star\), even under quadratically growing attacks, by increasing the adaptive gains and generating corrective inputs online.

\begin{figure}[ht]
\centering
\begin{minipage}{0.1\textwidth}
    \centering
    \includegraphics[width=\linewidth]{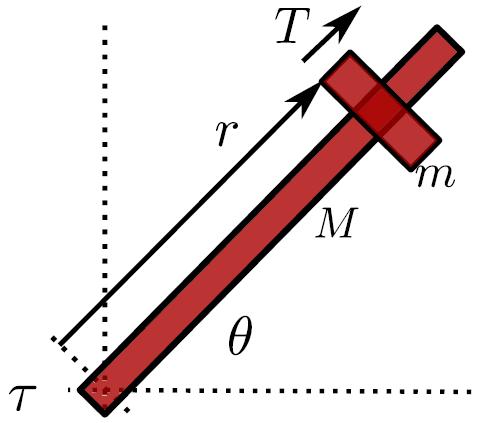}
\end{minipage}
\hspace{0.05mm}   
\begin{minipage}{0.36\textwidth}
    \centering
    \includegraphics[width=\linewidth]{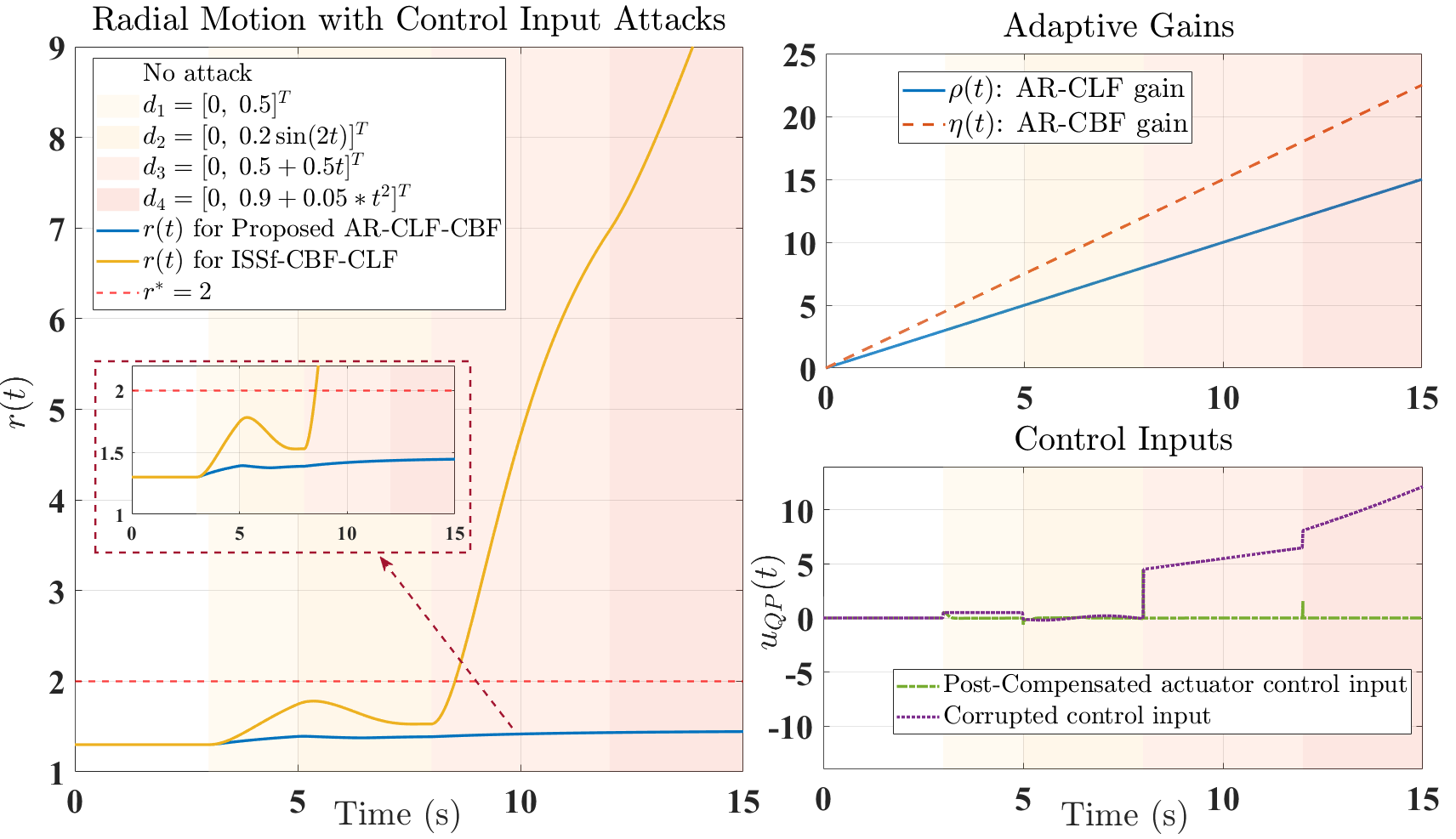}
\end{minipage}
\caption{(a) Robot structure and (b) Comparison between proposed AR-CLF-CBF with ISSf-CBF in \cite{kolathaya2018input}.}
\label{Robot_Ex}
\end{figure}
\section{Conclusion}
This paper introduced new notions of AR-CLFs and AR-CBFs, leading to a unified AR-CLF-CBF-QP framework for nonlinear affine systems under a broad class of adversarial  control-input FDIA, including time-varying and unbounded perturbations. By embedding a unified attack-compensation mechanism within both the CLF decrease condition and the CBF forward-invariance constraint, together with an online gain tuning law, the proposed framework enables real-time compensation of adversarial effects without requiring prior knowledge of attack bounds. Unlike ISS/ISSf methods that certify only disturbance-dependent enlarged safe sets, the proposed framework adaptively recovers the nominal safe set while jointly preserving stability and safety; Numerical results show improved resilience over robust CLF-CBF baselines.

\ifCLASSOPTIONcaptionsoff
  \newpage
\fi

\bibliographystyle{IEEEtran}

\bibliography{References}


\end{document}